\documentclass[
reprint,
superscriptaddress,
amsmath,amssymb,
aps,
]{revtex4-2}

\usepackage{soul} 
\usepackage{xcolor}
\usepackage{tabularx}
\usepackage{ulem} 
\usepackage{dcolumn}
\usepackage{bm}
\usepackage{bbm}
\usepackage{overpic}
\usepackage{color}
\usepackage{psfrag,graphicx}
\usepackage{verbatim}
\usepackage{enumitem}
\usepackage{dsfont}

\usepackage{float}
\usepackage{mathtools}
\usepackage[english]{babel}
\addto\captionsenglish{} 

\definecolor{mygrey}{gray}{0.35}
\definecolor{myblue}{rgb}{0.2,0.2,0.8}
\definecolor{myzard}{cmyk}{0,0,0.05,0}
\definecolor{mywhite}{rgb}{1,1,1}
\definecolor{myred}{rgb}{0.9,0.1,0.}
\usepackage[colorlinks=true,citecolor=myblue,linkcolor=myblue,urlcolor=myblue]{hyperref}

\usepackage[makeroom]{cancel}

\newtheorem{theorem}{Theorem}

\newtheorem{proposition}[theorem]{Proposition}

\newenvironment{proof}{\medskip\noindent\textbf{Proof.}}{\hfill$\blacksquare$\medskip}
\newenvironment{proof-of}[1]{\medskip\noindent\textbf{Proof of {#1}.}}{\hfill$\blacksquare$\medskip}
\newcommand{\ket}[1]{\left\vert#1\right\rangle}
\newcommand{\bra}[1]{\left\langle#1\right\vert}

\usepackage{tcolorbox}
\newcommand{\ayaka}[1]{{\color{cyan}#1}}

\begin{document}

\title{ Entanglement-assisted Hamiltonian dynamics learning}

\author{Ayaka Usui}
\email{ayaka.usui@uab.cat}
\affiliation{Física Teòrica: Informació i Fenòmens Quàntics, Departament de Física, Universitat Autònoma de Barcelona, 08193 Bellaterra, Spain}

\author{Guillermo Abad-L\'{o}pez}
\email{guillermo.abad@uab.cat}
\affiliation{Física Teòrica: Informació i Fenòmens Quàntics, Departament de Física, Universitat Autònoma de Barcelona, 08193 Bellaterra, Spain}
\affiliation{Qilimanjaro Quantum Tech, Carrer de Veneçuela 74, 08019, Barcelona, Spain}

\author{Hari krishnan SV}
\email{harikrishnan.sasidharan@uab.cat}
\affiliation{Física Teòrica: Informació i Fenòmens Quàntics, Departament de Física, Universitat Autònoma de Barcelona, 08193 Bellaterra, Spain}

\author{Anna Sanpera}
\email{anna.sanpera@uab.cat}
\affiliation{Física Teòrica: Informació i Fenòmens Quàntics, Departament de Física, Universitat Autònoma de Barcelona, 08193 Bellaterra, Spain}
\affiliation{ICREA, Pg.~Llu\'{i}s Companys 23, 08010 Barcelona, Spain}

\author{Some Sankar Bhattacharya}
\email{somesankar.bhattacharya@uab.cat}
\affiliation{Física Teòrica: Informació i Fenòmens Quàntics, Departament de Física, Universitat Autònoma de Barcelona, 08193 Bellaterra, Spain}


\begin{abstract}
Approximating the dynamics given by a complex many-body Hamiltonian with a simpler effective model lies at the interface of quantum Hamiltonian learning and quantum simulation. In this context, quantum generative adversarial networks (QGANs) have been shown to outperform standard Trotter-based approximations. However, their performance is often hindered by training plateaus and local minima that become increasingly severe with system size. 
To overcome these limitations, we propose an entanglement-assisted learning strategy that couples a single randomly initialized auxiliary qubit to the learning system at an intermediate stage of the training process. The interplay between randomization and entanglement significantly enhances the learning performance of the protocol.
\end{abstract}

\maketitle

\textit{Introduction.---} The properties of quantum many-body systems are intimately linked to the operator content of the Hamiltonian that governs their dynamics \cite{olsacher2025hamiltonian}. Given incomplete data originating from an unknown Hamiltonian, quantum Hamiltonian learning (QHL) encompasses a variety of methods aimed at inferring its operator structure, i.e., type of interactions and their strengths~\cite{Wiebe2014Hamiltonian, Wang17, Bairey19,Li2020Hamiltonian,Wang2015Hamiltonian,Bienias2021Meta,Hangleiter2024Robustly,Zubida2021Optimal,Wenjun2023Robust,Pastori2022Characterization,Zhao2025Learning,Arunachalam2025Testing,Pastori2022Characterization,Hu2025Ansatz}. Such data may stem from experimental limitations such as restricted access to expectation values of certain observables, or from theoretical constraints, for instance, having access to only a single eigenvector of the so-called parent Hamiltonian~\cite{Qi2019Determining}.
Quantum simulators, on the other hand, are purpose-built experimental platforms, either analog or digital, designed to probe and extract properties of complex quantum systems that are intractable for classical computation~\cite{Feynman1982Simulating,Lloyd1996Universal,Altman2021Quantum,Daley2022Practical,Beverland2022Assessing}. These two approaches, QHL and quantum simulation, can be viewed as dual tasks: one seeks to reconstruct a parent Hamiltonian from restricted data, while the other generates data from non-trivial Hamiltonians to infer properties of complex many-body quantum systems.

Between these two paradigms lies the task of Hamiltonian approximation, replacing a complex, possibly unknown Hamiltonian with a simpler, experimentally feasible one \cite{cubitt,Usui2024Simplifying}. This goal is relevant to both QHL and quantum simulation, as it enables tractable modeling and experimental realization of otherwise inaccessible quantum systems. A representative example involves approximating highly non-local Hamiltonians—such as those arising, for instance, in lattice gauge theories with three- or four-body interactions—by simpler Hamiltonians involving only two and one-body interactions. More generally, given an unknown complex target Hamiltonian $H_T$ of locality $k$, one may ask: What is the best approximation of its dynamics? Meaning what is the optimal effective Hamiltonian we can generate, $H_G$, composed only of one- and two-body interactions, that reproduces the same unitary evolution~\cite{Usui2024Simplifying}. 

Generically, the unitary evolution of non-trivial Hamiltonians, $U_T = e^{-iH_T t}$, involves computing the exponential of a sum of non-commuting operators. This is usually approached using the Trotter decomposition, which approximates the total time evolution into very short time steps and breaks the exponential of the sum into a product of simpler exponentials.
In digital quantum computation, the Trotter approximation can be implemented as a sequence of one- and two-qubit gates~\cite{Lloyd1996Universal}. If the sequence of gates is sufficiently long, the approximation converges to the exact unitary evolution. However, the Trotter decomposition does not necessarily yield the most efficient gate sequence to simulate the dynamics. To address this limitation, various alternative techniques have been proposed, see for instance~\cite{Gong2024Complexity}.

Here we employ quantum generative adversarial networks (QGANs), an unsupervised quantum machine learning model proposed in~\cite{Lloyd2018Quantum,Dallaire2018Quantum}, as a quantum extension of its successful classical counterpart; generative adversarial networks~\cite{goodfellow2014generative}.  The learning efficiency of QGANs has been reported in multiple instances for both quantum and classical data~\cite{superconducting_circuit,multiple_supercond_circuits,q_state_tomo_gans,random_dist,pure_state_approx,Exp_QGAN_Image_gen,Qugan}. 
An example of such efficiency for learning Hamiltonian unitary evolution was provided in~\cite{Chakrabarti2019Quantum}, where for a fixed time, the unitary evolution of a  \(3\)-qubit Heisenberg Hamiltonian evolution requires  \(\sim 10^4\) gates using the Trotter approximation, whereas the QGAN implementation achieved the same high-fidelity with just \(52\) gates.

Nevertheless, as in many machine learning protocols involving the minimization of a cost function in a large phase space, optimization of the training algorithm is hindered by unstable convergence and the vanishing gradient of the cost function~\cite{Ngo2023Survey}. As a consequence, the training may exhibit stagnation, even when sufficient resources are allocated in the QGAN. For example, increasing the number of gates may increase the expressivity of the algorithm and improve the probability of a successful optimisation; however, redundant gates can also stagnate the training, and even if the training goes well, it results in a lengthier final decomposition. 

In this work, we address the above challenges and propose an efficient strategy for learning unitaries within the QGAN framework. Our goal is twofold. First, we aim to approximate the dynamics of an unknown complex Hamiltonian with a simpler one that contains only the minimal number of one‑body and two‑body terms. Second, we show that incorporating an ancillary system during the training phase helps mitigate stagnation, effectively leveraging entanglement to facilitate the learning process. In doing so, our method not only enhances the overall efficiency of QGAN training but also enables a more compact decomposition of the target unitary transformation. Before proceeding further, we briefly review the fundamentals of QGANs.\\

\textit{Unitary learning via QGANs.---}
A learning process for both, classical and quantum states, can be formulated as an adversarial game between two players: a generator $G$ and a discriminator $D$, trying to learn an unknown state $T$. The generator's goal is to produce states that resemble, as closely as possible, the target $T$, while the discriminator seeks to distinguish them using a cost function, $\mathcal L$, which quantifies their distance. To all effects, the target $T$ can be treated as a black box.

During the learning process, both the generator and the discriminator are updated alternately.
First, with the discriminator fixed, the generator parameters $G(\theta)$ are adjusted to approach the target state, minimizing the cost function. Then, given the just-found, closer generated state, the discriminator $D(\phi)$ is improved, maximizing the cost function. This adversarial interaction can be naturally framed as a minimax game: $\min_{\theta} \max_{\phi} \  \mathcal{L}\left(\psi_T,\psi_{G(\theta)}, D(\phi)\right)$, where the improvement of one component drives the refinement of the other, allowing for unsupervised learning of states.

Assuming that both the generator and discriminator have infinite capacity, the optimization landscape can be chosen convex~\cite{Lloyd2018Quantum}, with the training eventually reaching its Nash equilibrium. There, the generator state is identical to the target, and the discriminator cannot tell them apart.
Therefore, while in general the cost function both increases and decreases during training due to the two opposing optimizations, the adversarial learning finishes by maximizing the fidelity between the generated and target states. 

A quantum circuit architecture is typically employed to implement QGANs (optimal control methods can also be used~\cite{Kim2024Hamiltonian}). With the generator consisting of a parametrized quantum circuit, constructed with multiple layers of parametrized gates, as schematically depicted in  Fig.~\ref{fig:qgan}(b), and the discriminator framed as a parametrized operator that differentiates the target state and the generator state, with such differentiation depending on the cost function.
In this work, we use the quantum Wasserstein distance~\cite{Chakrabarti2019Quantum,Palma2021Quantum,Learning_Quantum_data}, as the cost function. This metric is a continuous analog of the quantum ''Hamming distance'', allowing for a smooth optimisation over multiple qubits.
It contrasts with other metrics such as, e.g., the trace distance, which usually decay exponentially with increasing number of qubits.

\begin{figure}[t]
    \includegraphics[width=0.9
    \linewidth]{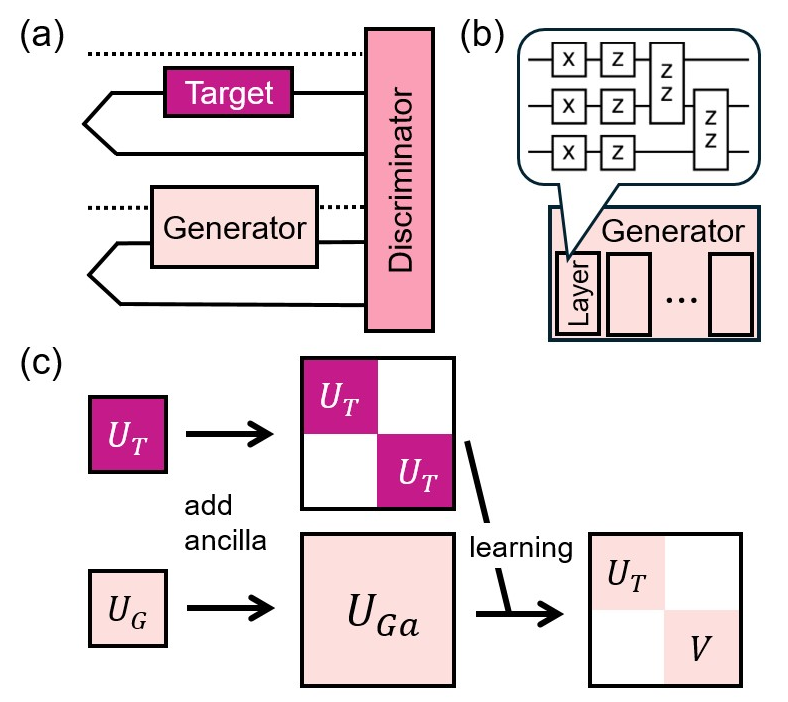}
    \caption{(a) Layout of ancilla-assisted QGANs. The auxiliary ancilla (dotted line) is embedded in the generator $G$. The input state is given by Eq.~\eqref{eq:bipartite_extra}. 
    (b) The generator is composed of several layers, with each layer consisting of single-qubit rotations $X$, $Z$, and two-qubit interactions $ZZ$.
(c) Sketch of ancilla-assisted learning, where $U_T$, $U_G$, and $U_{Ga}$ are, respectively, the operators of the target, the original generator, and the ancilla-expanded generator, with $V$ a new redundancy not used for training.
The white areas indicate zeros. 
    The expanded $U_{Ga}$ is trained against a block diagonal target until the top-left generator subspace approximates the target.
    }
    \label{fig:qgan}
\end{figure}

At the cost of increasing the number of input qubits, 
QGANs can be designed for the learning of unitaries by exploiting Choi-Jamio{\l}kowski isomorphism \cite{Choi, Jamiol}, where rather than replicating $U_T$ by sending batches of multiple different states to both the target and the generator~\cite{Huang21}, a maximally entangled state $\ket{\Omega}$ is used ~\cite{Chakrabarti2019Quantum,Kim2024Hamiltonian}.
Assuming that the target acts on an $N$-qubit system, one prepares, as input, a maximally entangled 2$N$-qubit state of the form:
\begin{equation} \label{eq:bipartite}
    \ket{\Omega}=
    \frac{1}{\sqrt{d}}
    \sum_{i=0}^{d-1}
    \ket{i}_{\text{A}}
    \ket{i}_{\text{B}}
    ,\end{equation}
where $d=2^N$ is the dimension of the Hilbert space, 
while $\ket{i}_{A(B)}$ denotes a state of the computational basis of $N$-qubits of system A(B).
Such a configuration is symbolically sketched by solid lines in Fig.~\ref{fig:qgan}(a).

The fidelity between the output states of the target unitary, $U_T$, and those of the generator, $U_G$, when both act only in one half of the system (B), is given by:
\begin{align} \label{eq:F_choi}
   F=&\left|
    \bra{\Omega}
    \left(\mathbb I_A \otimes U_G
    \right)^{\dagger}
    \left(
    \mathbb I_A\otimes U_T
    \right)
    \ket{\Omega}
    \right|^2\nonumber \\
    =&  \;
    \frac{1}{d^2}
    \left|
    \mathrm{Tr}[
    U_G^{\dagger}U_T
    ]
    \right|^2,
\end{align}
which is proportional to the Hilbert-Schmidt inner product of the operators $\langle U_G U_T\rangle = \mathrm{Tr}(U_G^\dagger U_T)$. The fidelity reaches its maximum value, $F=1$, if and only if $U_G$ and $U_T$ differ at most by a global phase. Thus, training takes the generator towards $U_G\simeq U_T$, despite $H_G\neq H_T$.\\

\textit{Ancilla-assisted learning.---}
The QGAN learning process can be framed as an optimization process that aims to provide the best approximation to $U_T$ within a fixed set of resources. Those involve the number of layers in $G$, the number of gates on each layer, the number of iterations, and the total time evolution. Typically, the central question is determining which generator ansatz and how many layers are required to achieve a sufficiently close approximation to the target, given a fixed number of training iterations.
In practice, however, it is not always possible to find the optimal decomposition of $U_T$  within a reasonable training effort (number of iterations), due to QGANs facing several variational circuits pathologies, like most quantum machine learning protocols do~\cite{McClean18,QVA_traps}.
Due to restricted capacities, redundant gates, and finite training times, optimization landscapes exhibit learning plateaus or local minima, creating unfavourable initial points for which the fidelity saturates at values much below one, or do not even converge to any given value.

To address the above issue, we increase the generator phase space by adding an \textit{interacting ancillary qubit}. This results in an increased expressive generator power, leading to a faster convergence towards the gate decomposition of  $U_T$ as we shall see below.
Our proposal for learning $U_T$  with an ancilla-assisted generator means that the input state for both, generator and target, is simply the product state of the Choi maximally entangled state of $2N$-qubits  with an ancillary qubit: $\ket {\Omega'}= \ket{\Omega}\otimes \ket{0}_a$, where for simplicity we initialize the ancilla in the state $\ket{0}$ and use the subscript $a$ to denote it. Explicitly:
\begin{align} \label{eq:bipartite_extra}
    \ket{\Omega'}
    &=
    \frac{1}{\sqrt{d}}
    \left (\sum_{i=0}^{d-1}
    \ket{i}_{\text{A}}
    \ket{i}_{\text{B}}\right)\otimes
    \ket{0}_{\text{a}}
    ,\end{align}
As shown in Fig.~\ref{fig:qgan}(a), the ancilla's action is only relevant to the generator, but since the discriminator requires comparing the outputs of our black box target $T$ and the generator $G$, the Hilbert space in both ends must coincide, so an untouched ancilla qubit is also passed together with the target output. 

The fidelity between the target and the generator output states is given now by:
\begin{align}
    F&=|\langle \Omega'|\left(\mathbb I_A \otimes U_{Ga}^\dagger \right)
    \left( \mathbb I_A \otimes U_T \otimes \mathbb I_a \right) |\Omega'\rangle |^2 =\nonumber\\
    &=\frac{1}{d^2}
    \left|\mathrm{Tr}[{U'_G}^{\dagger}U_{T}]\right|^2,
    \label{eq:Choi_ancillary}
\end{align}
where now $U_{Ga}$ is the unitary operator acting on half of the maximally entangled state plus the ancilla, as depicted in Fig.~\ref{fig:qgan}(c), and ${U'_G}\equiv \;_a\langle 0|{U_{Ga}}|{0}\rangle_a$ is its projection onto the ancillary subspace where the ancilla remains unchanged. As previously, high values of the fidelity correspond, up to a global phase, to ${U'_G}\simeq U_T$. Or equivalently $U_{Ga}\simeq U_T\otimes |0\rangle_a\langle0|\oplus V\otimes |1\rangle_a\langle1|$, with $V$ being a redundant unitary. Therefore, even though on each layer of the generator, the ancilla interacts with the other qubits, the learning process leads $U_{Ga}$ to a block diagonal form in the ancilla subspace, with $U_T$ in one of the blocks, as shown schematically in Fig.~\ref{fig:qgan}(c). 

In what follows, we demonstrate how the learning process is enhanced through the ancilla-assisted training. Specifically, we discuss in detail why the introduction of the ancilla during training can help to overcome stagnation. 
In essence, adding an ancilla at mid-training increases the dimensionality of the system dynamically, altering the cost function landscape and potentially opening up new optimization pathways, helping training escape local minima and plateaus.

As a proof of concept, we consider a simple 3-local target Hamiltonian, $H_T = \sigma_z^1 \sigma_z^2 \sigma_z^3$,  with the corresponding operator, $U_T = e^{-i \sigma_z^1 \sigma_z^2 \sigma_z^3}$, acting on three qubits for a unit time. We remark that the specific choice of target does not impact the generality of the learning procedure presented here. As already pointed out, the generator consists of multiple layers, each with nearest-neighbor interactions and local rotations, as illustrated in Fig.~\ref{fig:qgan}(b), while the discriminator consists of a generalized local measurement, parametrized by a linear combination of Pauli operators, before a local computational basis measurement in each qubit. Both the discriminator and generator parameters are randomly initialized at the start of each training. 
This generator architecture becomes universal in the limit of infinitely many layers, as shown in Ref.~\cite{Barenco1995Elementary}. Consequently, the expressivity of our generator increases with the number of layers. 
Based on our numerical results, the shorter gate sequence that performed sufficiently well is obtained using three layers. We fix this configuration for the remainder of our study.

Each training sequence will consist of up to 3000 iterations (6000 in total when two training are done sequentially, such as when an ancilla is added mid-training), or until a target fidelity of $99\%$ is reached in iteration $i^*$, i.e. $F^{(j)}_{i^*} \geq 0.99$, where $F^{(j)}_i$ denotes the fidelity in iteration $i$ for the $j$th training run. For each run, we take as a figure of merit the maximum fidelity attained during training, $F^{(j)}_{\max} = \max_i F^{(j)}_i$. When multiple independent runs are performed, we report the average value of this same quantity:
\begin{equation}
F^{\mathrm{avg}}_{\max} = \langle F^{(j)}_{\max} \rangle_j
= \big\langle \max_i F^{(j)}_i \big\rangle_j . \nonumber
\end{equation}

\begin{figure}
\centering
    \ \ \includegraphics[width=0.93\linewidth]{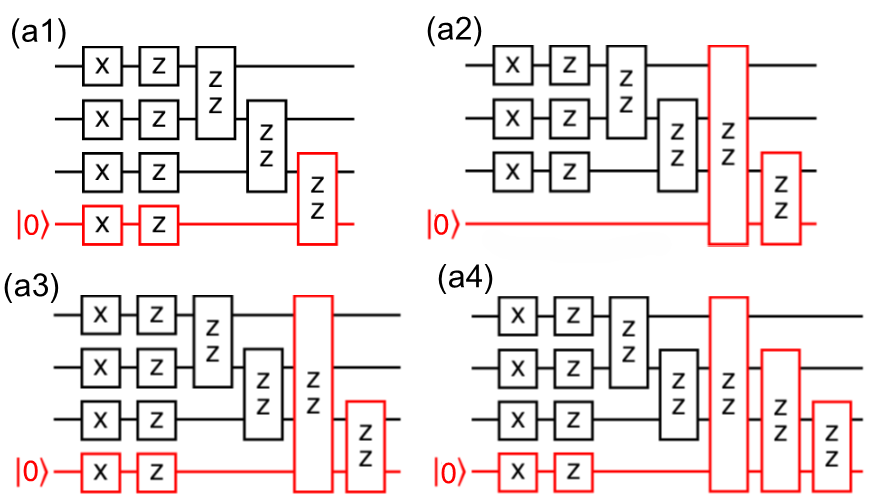}

        \vspace{13px}
        \includegraphics[width=0.91\linewidth]{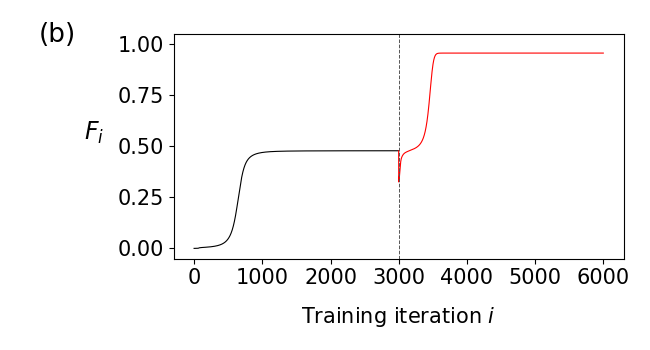} \ \ \ \ \ \ 

        \vspace{-9px}
        \includegraphics[width=0.9\linewidth]{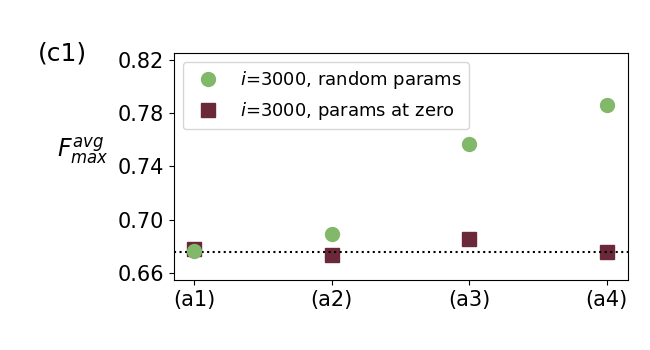} \ \ \ \ \ \ 
        
        \vspace{-15px}        
         \includegraphics[width=0.9\linewidth]{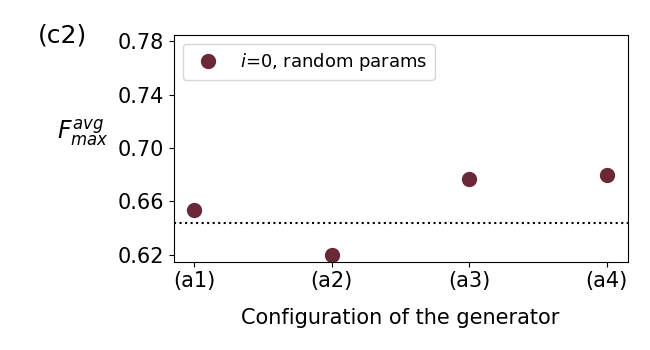} \ \ \ \ \ \ 
    \caption{Ancilla-assisted learning when the time evolution operator, $e^{-i\sigma_z^1\sigma_z^2\sigma_z^3}$, is generated with a hardware-efficient ansatz. 
      (a) Ancilla configurations for each layer. 
    (b) Fidelity $F_i$ versus training iteration. In a typical run, a learning plateau is signalled by a fidelity saturating at $F_i<<1$.  The plateau is overcome after adding an ancilla at iteration $i=3000$ (vertical dashed line) and letting the learning proceed.
    (c) The average maximum fidelity, $F^{avg}_{max}$, is plotted against the ancilla configurations. The dotted horizontal line indicates the reference fidelity achieved in the absence of an ancilla.
 (c1) Ancilla is added in the middle of training. The green dots (brown squares) denote the fidelity when the parameters of the ancilla are initiated randomly (set to zero).
   (c2) Ancilla is added from the beginning of training with random initial parameters.
     }
    \label{fig:midrun}
\end{figure}   
    
With this, we analyze the impact of the ancillary qubit from three complementary perspectives: (i) its connectivity with the generator qubits, (ii) the timing of its introduction, and (iii) its initial parameters.

To explore the first crucial aspect, we consider several connectivity configurations for each layer, as shown in Fig.~\ref{fig:midrun}(a): (a1) the ancilla interacts with a single qubit; (a2/a3) the ancilla interacts with two qubits at the edges (without/with ancilla single qubit rotations); and (a4) the ancilla interacts with all generator qubits.  

The timing at which the ancilla is introduced also strongly influences the learning process. We examine two scenarios: in the first, the ancilla is present from the start of training ($i=0$); in the second, it is added midway through the training ($i=3000$). In Fig.~\ref{fig:midrun}(b), we depict the average fidelity $F_i$ versus the iteration for a typical run when the ancilla is added mid-training; where the initial training stagnates after a few iterations, and the fidelity reaches a steady value around $F\simeq 0.5$. Increasing the number of iterations does not improve the maximum fidelity, signaling stagnation. Finally, this is overcome by adding an ancilla in the middle of training with random initial parameters and connectivity (a3).

Following this, the last crucial aspect is randomizing the initial parameters of the ancilla to effectively use the increased optimization landscape. Randomization produces a ``kick'' to the system, which makes it jump out of the stagnation. Conversely, when the ancilla parameters are initially set to zero (no initial rotation), only marginal improvements are observed in Fig.~\ref{fig:midrun}(c1), indicating that a nontrivial initialization—or ``kick''—is essential. Although note that randomization alone is not enough, when some parameters are randomized without expanding the phase space with an ancillary qubit, no improvement is achieved, as analyzed in Appendix~\ref{app:Randomised_restart}.

Therefore, for entanglement-assistance to be useful, the ancilla, together with its increase of the optimization landscape, should be introduced when stagnation has already occurred, and with a certain degree of randomness.

Our numerical results confirm such a description. When the ancilla is added at the start of the training ($i=0$), the improvement in fidelity is practically irrelevant ($\leq 3\%$), irrespective of the connectivity of the ancillary qubit (a1-a4), as shown in Fig.~\ref{fig:midrun}(c2).
On the other hand, clear improvement is observed in Fig.~\ref{fig:midrun}(c1), when the ancilla is added in the middle of the training process ($i=3000$), with randomized parameters. In this case, the average maximum fidelity increases by over 11\% of its original value when the ancilla is connected to all qubits (a4), although fidelity also increases substantially ($>8.5\%$) when connected only to two previously unconnected qubits (a3), which points to an efficient implementation in current limited connectivities.

The reason for the improvements with these concrete connectivities can be demonstrated analytically by a dimensional expressivity analysis of the generator \cite{Funcke2021dimensional}. 
This is done by computing the Jacobian matrix of the circuit with respect to its parameters and examining its rank. 
As shown in Appendix~\ref{app:Expressivity}, the Jacobian rank of the circuit is highest for (a-4) and (a-3), indicating that they are the most expressive of our ancillary expansions.\\

\textit{Summary.---}
A major obstacle in quantum machine learning, and in variational quantum algorithms more broadly, is the emergence of learning plateaus and local minima in the optimization landscape, which severely hinders efficient training. These are particularly problematic when learning complex target unitaries using restricted generator models, as the expressivity gap often gives rise to vanishing gradients and poor convergence.
Motivated by the need for practical strategies that alleviate stagnation without substantially increasing circuit depth or parameter count, we explore ancilla-assisted learning as a means to dynamically reshape the optimization landscape while preserving a realistic circuit-based implementation.

We have shown that introducing an ancilla qubit can significantly enhance learning performance when learning a target unitary $U_T$ with a simpler generator $U_G$. Three features are critical to this improvement: entanglement between the ancilla and generator qubits, random initialization of the ancilla, and insertion of the ancilla at an intermediate epoch of the training process rather than at the start. Together, these ingredients increase the effective dimensionality of the Hilbert space and modify the geometry of the cost-function landscape favorably, enabling the optimizer to escape regions associated with learning plateaus and local minima. Numerical simulations confirm that this entanglement-assisted strategy markedly improves the approximation of a 3-local Hamiltonian using a 2-local generator, demonstrating that the approach can compensate for expressivity limitations in the model.

Although our numerical studies are restricted to small system sizes, the underlying mechanism- namely, the enlargement and restructuring of the accessible state space-suggests that ancilla-assisted learning may become even more powerful in larger-scale settings. In particular, the increased connectivity an ancilla can bring to larger systems, with more distant qubits, is expected to further enhance expressivity and gradient flow. The protocol’s reliance on random initialization and timed ancilla insertion also points toward more general strategies for dynamically varying the ansatz during training. These ideas could be integrated with other techniques, such as adaptive circuit growth \cite{Grimsley19, Tang21} or noise-aware training \cite{Wang22}, and may be relevant also in variational quantum simulation \cite{Yuan19} and quantum control \cite{deKeijzer23}.

Several important questions remain open. First, how the benefits of ancilla assistance scale with system size, circuit depth, and noise, in realistic hardware implementations needs further exploration. Second, the optimal timing and number of ancilla qubits required for different learning tasks have yet to be systematically characterized. Third, a deeper theoretical understanding of how ancilla-induced entanglement reshapes the cost-function landscape and affects gradient statistics would be valuable. Addressing these questions will be essential for determining whether ancilla-assisted strategies provide scalable solutions to learning plateaus in practical quantum machine learning applications.\\

\textit{Acknowledgments.---} We acknowledge stimulating discussions with María García Díaz and Mohammad Mehboudi.
A.U. is financially supported by JSPS Overseas Research Fellowships. G.A.L. acknowledges support from the program of predoctoral funds FI-STEP (2025-3XJ3GQYK8), from Departament de Recerca i Universitats of Generalitat de Catalunya, and the co-financing by Fons Social Europeu Plus.
We acknowledge support from MICINN grant PID2022-139099NBI00, with the support of FEDER funds, the Spanish Government with funding from European Union NextGenerationEU (PRTR-C17.I1), the Generalitat de Catalunya, the Ministry for Digital Transformation and of Civil Service of the Spanish Government through the QUANTUM ENIA project -Quantum Spain Project- through the Recovery, Transformation and Resilience Plan NextGeneration EU within the framework of the Digital Spain 2026.\\

\textit{Data availability.---}
The code used for this work is available online~\cite{code}. 

\bibliography{bib.bib}

\clearpage

\appendix
\section*{Supplemental material}

\section{Randomised restart}\label{app:Randomised_restart}

In the main text, we study the impact of an ancilla on the efficiency of the learning process. Particularly, we found that adding an ancilla in the middle of training makes the learning efficient and that the parameters involved with the ancilla are more effective when randomized than when initialized to zero.
Here, we study another possible approach without using additional qubits in comparison and consider randomizing some of the parameters of the generator. 
Specifically, in the middle of training, we choose some of the parameters randomly at some ratio and replace them with random numbers. This can be regarded as restarting the training with some prior information. A question is whether such prior information speeds up the learning and whether kicking the generator in this way overcomes the ancilla case. We find that this is not the case, as demonstrated below. Fig.~\ref{fig:aveFvsrandom} shows the averaged maximum fidelities for different ratios we choose for randomization. Large deviation from the case where the parameters are not modified through the training is not observed.

This approach is inspired by ``warm start''~\cite{Egger2021Warm, kashif24}, where one selects initial points convenient to reach high fidelity. In our case, we select the parameters for the second 3000 iteration routine based on those for the first 3000 iteration routine. In comparison with when the parameters are chosen fully randomly, our case sets parameters ``close'' to reach a high value of the fidelity.

\begin{figure}[b]
    \centering
    \includegraphics[width=0.85\linewidth]{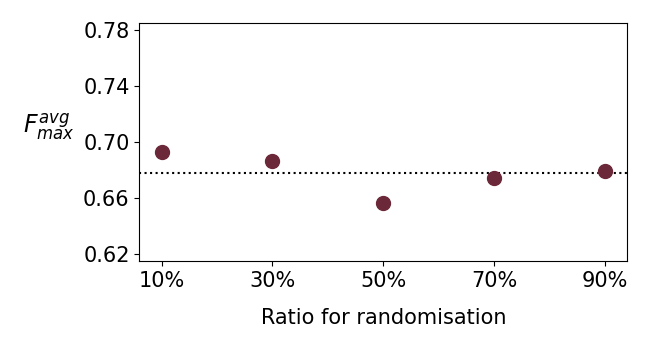}
    \caption{Randomised restart approach, when the time evolution operator $e^{-i\sigma_z^1\sigma_z^2\sigma_z^3}$ is targeted. 
    Maximum reached fidelity ($F^{avg}_{max}$), given by Eq.~\eqref{eq:F_choi}, averaged over 100 attempts, for different randomization ratios we choose.
    The generator is trained first for 3000 iterations, and then after some of the parameters are randomized, for another 3000 iterations.
    The dotted lines represent the reference fidelity achieved when no randomization is done.
    }
    \label{fig:aveFvsrandom}
\end{figure}

The selection of parameters to be changed in the middle of training is critical. In our case, we select them randomly and find that it does not work well. If prior information is constructed more informedly, it may work better, and more investigation is needed for this. 

Nevertheless, it is clear that, compared to this result, the advantage obtained by adding the ancilla comes from the fact that the dimensions of the system are expanded and not just that some of the parameters are randomized.\\

\section{Expressivity Analysis}\label{app:Expressivity}

We have shown that the performance of QGAN depends on the connectivity configuration of the ancilla, and a configuration that connects distant qubits of the generator improves the learning. In this section, we give an analytical explanation for this using the dimensional expressivity analysis introduced in Ref.~\cite{Funcke2021dimensional}.

Dimensional expressivity analysis allows us to determine the number of independent circuit parameters, and hence, a circuit that has a higher expressivity will have fewer redundant parameters. By computing the Jacobian matrix of the circuit, we determine the dimension of the manifold of states that is accessible to the circuit. If the rank equals the number of parameters, all the parameters independently expand the reachable state space; if not, some of the parameters are redundant because they do not change the output state in a new direction.

\begin{figure}[h]
    \centering
    \includegraphics[width=0.6\linewidth]{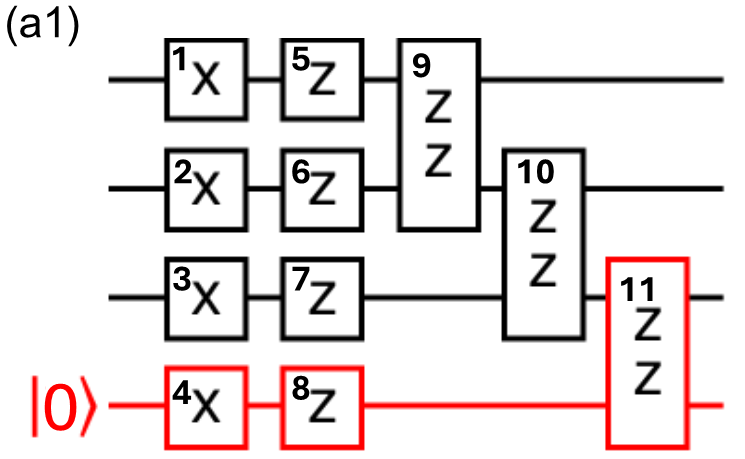}
    \caption{Expressivity analysis of the ancilla-assisted learning protocol. Optimization of expressivity corresponds to a larger rank of the Jacobian (see \ref{eq:jacobian}). In the figure, we enumerate the parameters corresponding to the circuit $C(\text{a-1})$. Other circuits  
    ($C(\text{a-2})-C(\text{a-4})$)  are treated similarly.
    }
    \label{fig:placeholder2}
\end{figure}

We now apply this analysis to the circuit configurations given in Fig.~\ref{fig:midrun}(a). Configuration a2 can be viewed as a special case of a3, obtained by setting the ancilla’s local rotations to zero. Since this does not generate any new interaction between the ancilla and the generator qubits, we restrict our attention to configurations a1, a3, and a4. 

Let us denote the circuit corresponding to the configuration-(an) with $N$ parameters $\boldsymbol{\theta} = (\theta_1$,$\ldots$,$\theta_N)$  as $C_{n}(\boldsymbol{\theta})$, where $n=1,3,4$. 
For convenience, we drop the notation $\theta$ for the reminder of this manuscript.
We define the Jacobian of the circuit as
\begin{equation}
J_n(\boldsymbol{\theta}) 
= 
\begin{pmatrix}
\frac{\partial C_n}{\partial \theta_1}(\boldsymbol{\theta}) &
\frac{\partial C_n}{\partial \theta_2}(\boldsymbol{\theta}) &
\cdots &
\frac{\partial C_n}{\partial \theta_N}(\boldsymbol{\theta})
\end{pmatrix}.
\label{eq:jacobian}
\end{equation}
The expressivity of $C_n$ is defined as  $\mathcal{E}_n:=\operatorname{rank}\Big(J_n(\boldsymbol{\theta})\Big)$. Physically, it denotes the manifold of quantum states reachable by the circuit.

\begin{proposition}\label{prop:express}
   Consider the circuit $C_n$ corresponding to the configuration-(an) for $n=1,3,4$. Then, the expressivity of the circuits follows the ordering $ \mathcal{E}_4>\mathcal{E}_3>\mathcal{E}_1$  whenever the parameters are not zero.
\end{proposition}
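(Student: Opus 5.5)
The plan is to compute, for each configuration, the rank of the Jacobian $J_n$ of Eq.~\eqref{eq:jacobian} at a generic nonzero parameter point. We then show that the three ranks are strictly ordered. The Jacobian has entries that are analytic (indeed trigonometric-polynomial) functions of $\boldsymbol{\theta}$. Its rank is therefore maximal on an open dense set, and drops only on an analytic subvariety. It thus suffices to (i) bound each $\mathcal{E}_n$ from above by a structural count, and (ii) exhibit, for each $n$, explicit nonzero parameters attaining that bound. Hypothesis (ii) then holds generically, which we read as the meaning of ``whenever the parameters are not zero''; points where some angle vanishes are exactly where extra commutations appear and the rank drops.

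For the upper bounds I would write each generator layer as a product of exponentials $e^{-i\theta_k P_k}$ with $P_k\in\{X_j,Z_j,Z_jZ_l\}$. Each column of $J_n$ is then $-i\,W_k P_k W_k^\dagger\,C_n$, where $W_k$ is the circuit prefix. The rank of $J_n$ therefore equals the dimension of the real span of the conjugated generators $\tilde P_k=W_kP_kW_k^\dagger$, viewed as elements of $\mathfrak{su}(2^{4})$ for the three system qubits plus the ancilla. Restricting to the relevant output if the circuit acts on $\ket{\Omega'}$ changes nothing essential, because the Choi map is injective on $U_{Ga}$.

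Redundancies come from generators that commute through neighbours. Consecutive $Z$ rotations and $ZZ$ couplings on the same qubit merge. Moreover, in configuration a1 the ancilla couples to only one qubit, with no ancilla rotations, so its $ZZ$ terms commute with that qubit's $Z$ rotations and partially collapse into them. Counting these merges gives explicit upper bounds $r_1<r_3<r_4$. The difference between a3 and a1 comes from the extra $ZZ$ coupling to the distant edge qubit and the ancilla's own $X,Z$ rotations. These rotations prevent the ancilla's $Z$-type couplings from commuting, so they contribute new independent directions. The difference between a4 and a3 comes from the additional coupling to the middle qubit, which generates $Z_aZ_2$-type directions not reachable from a3 within the given depth.

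For the lower bounds, I would pick concrete nonzero angles (e.g.\ all equal to a generic value like $\pi/7$, or the identity-adjacent limit expanded to first nontrivial order). I would then show $r_n$ linearly independent $\tilde P_k$ by tracking their Pauli-string supports. Conjugation by a nonzero-angle $X$ rotation maps $Z_j\mapsto \cos Z_j+\sin Y_j$, which produces distinct leading Pauli strings that can be triangularized. Combined with the upper bounds, this gives $\mathcal{E}_n=r_n$ generically, and hence the ordering.

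The main obstacle is the exact bookkeeping: showing that the upper bounds are tight and strictly separated, rather than just plausible. I expect this to require either a careful leading-Pauli-string triangular argument across all three layers or, more realistically, a certified exact rank computation at one rational-angle point (e.g.\ in exact trigonometric arithmetic) for each configuration, combined with the analyticity argument above to extend to generic nonzero parameters.
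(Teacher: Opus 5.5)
Your scaffolding is sound: the Jacobian is analytic, so its generic rank is attained on an open dense set and can be certified at one point. Once that point is actually supplied, this would be more of a proof than the paper's. The paper gets $\mathcal{E}_4\ge\mathcal{E}_3\ge\mathcal{E}_1$ from the nesting $C_3=e^{-i\frac{\Phi}{2}\sigma_Z^1\sigma_Z^4}C_1$, $C_4=e^{-i\frac{\Theta}{2}\sigma_Z^1\sigma_Z^3}C_3$, and then only checks strictness by random sampling. The gap is in what you apply the argument to.

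The paper's $J_n$ is the Jacobian of the output \emph{state} $C_n\ket{0000}$ of a \emph{single} layer: the listed $\partial(\theta_k)$ are vectors in $\mathbb{C}^{16}$, with $11$, $12$, $13$ parameters. You instead compute the real span of the $\tilde P_k$ in $\mathfrak{su}(16)$, i.e.\ the Jacobian of the unitary. Your bridge, injectivity of the Choi map on $U_{Ga}$, is false. The ancilla enters in $\ket{0}$, so $X\mapsto(\mathbb{I}_A\otimes X)\ket{\Omega'}$ sees only the ancilla-$\ket{0}$ columns of $X$; at $\boldsymbol{\theta}=0$, $Z_a$ becomes a global phase and $Z_aZ_j$ becomes $Z_j$. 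The map $X\mapsto X\ket{0000}$ sees a single column.

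The two ranks genuinely differ. For one layer the $\tilde P_k$ are the $Z_j$, the coupled $Z_jZ_l$, and diagonal conjugates of the $X_j$, which flip distinct bits. So the unitary rank is $11,12,13$ at \emph{every} point, including $\boldsymbol{\theta}=0$. The state rank is $5$ for all three circuits there, which is exactly the paper's remark that they coincide at zero.

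Your redundancy mechanism also fails at the level you chose. Commuting but distinct Paulis ($Z_j$ and $Z_jZ_l$, or $Z_j$ and $Z_aZ_j$) are linearly independent in $\mathfrak{su}(16)$ and do not merge. The collapse you describe is a state-level effect of the ancilla sitting in $\ket{0}$.

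The circuits are also not the ones you describe. In the paper's proof $C_1$ already carries $R_x,R_z$ on the ancilla, and a3 differs from a1 by a single $ZZ$ gate. With three layers, the state Jacobians would put $33,36,39$ columns into a tangent space of real dimension at most $31$. So the bounds $r_1<r_3<r_4$ you set out to derive are not the right targets.

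For the paper's quantity, no structural counting is needed.
\begin{itemize}
\item All $R_z$ and $R_{ZZ}$ gates are diagonal, so $\psi(z)=e^{-i\phi(z)}m(z)$. Here $m$ is the product amplitude of the $R_x$ gates and $\phi$ is linear in the remaining angles.
\item Each $R_x$ derivative is $g_j\psi$, with $g_j$ real and nonconstant in $s_j=(-1)^{z_j}$; its values are $-\tfrac12\tan\frac{\theta_j}{2}$ and $\tfrac12\cot\frac{\theta_j}{2}$.
\item Each $R_z$ or $R_{ZZ}$ derivative is $-\tfrac{i}{2}h_k\psi$, with $h_k$ a distinct nonconstant Walsh function $s_j$ or $s_js_l$.
\item If no $R_x$ angle lies in $\pi\mathbb{Z}$, then $\psi(z)\neq0$ for every $z$. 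Dividing by $\psi$, any real linear relation splits into a real part and an imaginary part.
\item Independence of the Walsh functions then gives full real column rank $11$, $12$, $13$.
\end{itemize}
This proves the strict ordering on an explicit open dense set. It also shows the correct hypothesis concerns only the $R_x$ angles, not ``parameters nonzero''. For example, all $R_x$ angles equal to $\pi$ with arbitrary nonzero $R_z,R_{ZZ}$ angles gives rank $5$ for every circuit. So your claim that the rank drops exactly where some angle vanishes is also inaccurate.
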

\begin{proof}
We start by observing that $C_3=e^{-i\frac{\Phi}{2}\sigma_Z^1\sigma_Z^4}C1$. Hence, for any range of parameters of $C_1$, $C_3$ can achieve the same expressivity by setting $\Phi = 0$, demonstrating that $\mathcal{E}_3\geq\mathcal{E}_1$. Observing that $C_4=e^{-i\frac{\Theta}{2}\sigma_Z^1\sigma_Z^3}C3$ and applying the same reasoning, we obtain $\mathcal{E}_4\geq\mathcal{E}_3\geq\mathcal{E}_1$\ayaka{.}
    
We now establish the strict inequality numerically. To do this, we randomly sample parameters for each circuit and calculate the corresponding expressivity using  Eq.~\eqref{eq:jacobian}. For completeness, we shall describe how the analysis is performed for $C_1$. The calculations for the other circuits follow in a similar manner.
Let us define the unitaries,

\begin{widetext}
    
\begin{align}   
U_{\mathrm{ent}}
&=
\big( I \otimes I \otimes R_{ZZ}(\theta_{11}) \big)
\big( I \otimes R_{ZZ}(\theta_{10}) \otimes I \big)
\big( R_{ZZ}(\theta_{9}) \otimes I \otimes I \big)
,\\
U_{0}
&=
\big( R_{z}(\theta_5)\, R_{x}(\theta_1) \big)
\otimes
\big( R_{z}(\theta_6)\, R_{x}(\theta_2) \big)
\otimes
\big( R_{z}(\theta_7)\, R_{x}(\theta_3) \big)
\otimes
\big( R_{z}(\theta_8)\, R_{x}(\theta_4) \big). 
\end{align}   
\end{widetext}
Now we write $C_1 = U_{ent} U_o$ . 
Let us enumerate the eleven circuit parameters as shown in Fig.~\ref{fig:placeholder2}.  
The partial derivative of $C_1$ with respect to $\theta_k$ is denoted as $\partial(\theta_k)$.

\begin{align*}
 \partial(\theta_1) =& -\frac{i}{2}\, U_{\mathrm{ent}}
\big( X \otimes I \otimes I \otimes I \big)\, U_0 \, |0000\rangle , \\   
\partial(\theta_2) =& -\frac{i}{2}\, U_{\mathrm{ent}}
\big( I \otimes X \otimes I \otimes I \big)\, U_0 \, |0000\rangle ,\\
\partial(\theta_3) =& -\frac{i}{2}\, U_{\mathrm{ent}}
\big( I \otimes I \otimes X \otimes I \big)\, U_0 \, |0000\rangle ,\\
\partial(\theta_4) =& -\frac{i}{2}\, U_{\mathrm{ent}}
\big( I \otimes I \otimes I \otimes X \big)\, U_0 \, |0000\rangle ,\\
\partial(\theta_5) =& -\frac{i}{2}\, U_{\mathrm{ent}}
\big( Z \otimes I \otimes I \otimes I \big)\, U_0 \, |0000\rangle ,\\
\partial(\theta_6) =& -\frac{i}{2}\, U_{\mathrm{ent}}
\big( I \otimes Z \otimes I \otimes I \big)\, U_0 \, |0000\rangle ,\\
\partial(\theta_7) =& -\frac{i}{2}\, U_{\mathrm{ent}}
\big( I \otimes I \otimes Z \otimes I \big)\, U_0 \, |0000\rangle ,\\
\partial(\theta_8) =& -\frac{i}{2}\, U_{\mathrm{ent}}
\big( I \otimes I \otimes I \otimes Z \big)\, U_0 \, |0000\rangle ,\\
\partial(\theta_9) =& -\frac{i}{2}\, 
\big( (Z\otimes Z)  \otimes I \otimes I \big)
\,U_{ent} U_0 \, |0000\rangle ,\\
\partial(\theta_{10}) =& -\frac{i}{2}\, 
\big(I\otimes (Z\otimes Z) \otimes I \big)
\, U_{ent}U_0 \, |0000\rangle ,\\
\partial(\theta_{11}) =& -\frac{i}{2}\, 
\big(I \otimes I \otimes (Z\otimes Z) \big)
\, U_{ent}U_0 \, |0000\rangle .
\end{align*}

We sample  $10^6$ random values for the parameters $\theta_1,\theta_2, \dots \theta _{11}$ and calculate the expressivity. We do the same for $C_3$ and $C_4$, each having twelve and thirteen parameters, respectively. When the parameters are not equal to zero, we find that the expressivities satisfy the inequality
$\mathcal{E}_4> \mathcal{E}_3> \mathcal{E}_1$. However, all the circuits have the same expressivity when the parameters are set to zero.
 This completes the proof.
\end{proof}

  In the main text, we see a significant improvement in learning performance when going from (a1) to (a3) and then to (a4). Proposition \ref{prop:express} gives a rigorous theoretical explanation for this phenomenon. In configuration (a1), the ancilla interacts with a single qubit of the generator, whereas in (a2), it interacts with two qubits that are not linked by any gates. Thus, (a2) creates a mediated channel for entangling two distant qubits. However, (a2) does not have local rotations on the ancilla, and this is rectified in (a3), for which we see a significant increase in $F^{\mathrm{avg}}_{\max}$. 
This is taken a step further in (a4), where the ancilla interacts with all the qubits, and we see improved performance. How to choose a configuration for a given Hamiltonian is an interesting question to ask at this juncture. While (a4) gives the best fidelities, it requires an extra gate, which would be demanding to perform in a lab. On the other hand, (a3) achieves a slightly lower fidelity with fewer resources. A precise cost-benefit analysis of the number of gates required and the fidelities achieved is required, which we leave for future work.
\end{document}